\g@addto@macro{\endabstract}{\@setabstract}
\renewcommand{\leq}{\leqslant}
\renewcommand{\geq}{\geqslant}
\newcommand{\iidsim}{\stackrel {\textrm{ {\sc iid }}} {\sim} }
\newcommand*\diff{\mathop{}\!\mathrm{d}}
\renewcommand{\epsilon}{\varepsilon}
\newcommand{\bB}{\mathscr B}
\newcommand{\gG}{\mathcal G}
\newcommand{\vV}{\mathcal V}
\newcommand{\hH}{\mathcal H}
\newcommand{\XX}{\mathsf X}
\renewcommand{\AA}{\mathsf A}
\newcommand{\FF}{\mathsf F}
\newcommand{\DD}{\mathsf D}
\newcommand{\ZZ}{\mathsf Z}
\newcommand{\YY}{\mathsf Y}
\newcommand{\GG}{\mathbbm G}
\newcommand{\HH}{\mathbbm H}
\newcommand{\VV}{\mathbbm V}
\newcommand{\RR}{\mathbbm R}
\newcommand{\EE}{\mathbbm E \,}
\theoremstyle{plain}
\newtheorem{theorem}{Theorem}[section]
\newtheorem{lemma}[theorem]{Lemma}
\newtheorem{proposition}[theorem]{Proposition}
\theoremstyle{definition}
\newtheorem{assumption}{Assumption}[section]
\begin{document}

\title{}

\date{\today}

\begin{center}
	\Large 
    Dynamic optimal choice when rewards are unbounded below\footnote{We thank 
		Takashi Kamihigashi and Yiannis Vailakis 
		for valuable feedback and suggestions, as well as audience members
		at the Econometric Society meeting in Auckland in 2018 and the 2nd
		Conference on Structural Dynamic Models in Copenhagen.
		Financial support from ARC Discovery Grant
		DP120100321 is gratefully acknowledged. \\ 
		\emph{Email addresses:} \texttt{qingyin.ma@cueb.edu.cn}, \texttt{john.stachurski@anu.edu.au} }

	\bigskip
	\normalsize
	Qingyin Ma\textsuperscript{a} and John Stachurski\textsuperscript{b} \par \bigskip
	
	\textsuperscript{a}ISEM, Capital University of Economics and Business  \\
	\textsuperscript{b}Research School of Economics, Australian National University \bigskip
	
	\today
\end{center}

\begin{abstract} 
    We propose a new approach to solving dynamic decision problems with rewards that
    are unbounded below.  The approach involves transforming the Bellman
    equation in order to convert an unbounded problem into a bounded one.  The
    major advantage is that, when the conditions stated below
    are satisfied, the transformed problem can be solved by iterating with a 
    contraction mapping.  While the method is not universal, we show by example
    that many common decision problems do satisfy our conditions.
	
	\vspace{1em}
	\noindent
	\textit{JEL Classifications:} C61, E00 \\
	\textit{Keywords:} Dynamic programming, optimality
\end{abstract}


\section{Introduction}

Reward functions that are unbounded below have long been a stumbling block
for recursive solution methods, due to a failure of the standard contraction
mapping arguments first developed by \cite{blackwell1965discounted}. 
At the same time, such specifications are popular in economics and finance,
due to their convenience and well-established properties. This issue is
more than esoteric, since the Bellman equation for such problems can have
multiple solutions that confound the search for optima.
Computation of solutions, already
challenging when the state space is large, becomes even more so when
rewards are unbounded.


Here we propose a new approach to handling problems with values that are
unbounded below.  Instead of creating a new optimality theory, our approach
proceeds by transforming the Bellman equation to convert these unbounded
problems into bounded ones.  The main advantage of this approach is that, when
the conditions stated below are satisfied, the transformed problem can be
solved using standard methods based around contraction mappings. The technical
contribution of our paper lies obtaining suitable conditions and providing a
proof that the solution to the transformed problem is equal to the solution to
the original one.    While the method is not universal, we show by example
that many well-known decision problems do satisfy our conditions.

Our work contributes to a substantial existing literature on dynamic choice
with unbounded rewards.  The best known approach to such problems is the
weighted supremum norm method, originally developed by
\cite{wessels1977markov} and connected to economic modeling by
\cite{boyd1990recursive}.  This approach has been successful in treating many
maximization problems where rewards are unbounded above.   Unfortunately, as
noted by many authors, this same approach typically fails when rewards are
unbounded below.\footnote{See, for example, the discussions in
    \cite{le2005recursive} or \cite{jaskiewicz2011discounted}.  
    \cite{alvarez1998dynamic} find some success handling certain
problems that are unbounded below using weighted supremum norm methods,
although they require a form of homogeneity that fails to hold in the
applications we consider. \cite{bauerle2018stochastic} extend the weighted
supremum norm technique to risk sensitive preferences in a setting where
utility is bounded below.}

This failure was a major motivation behind the development of the local
contraction approach to dynamic programming, due to
\cite{rincon2003existence}, \cite{martins2010existence} and, for the
stochastic case \cite{matkowski2011discounted}.  This local contraction
method, which requires contractions on successively larger subsets of the
state space, is ingenious and elegant but also relatively technical, which might be the
cause of slow uptake on the part of applied economists.  A second disadvantage
in terms of applications is that the convergence results for value function
iteration are not as sharp as with traditional dynamic programming.

Another valuable contribution is \cite{jaskiewicz2011discounted}, which
explicitly admits problems with rewards that are unbounded below.  In this
setting, they show that the value function of a Markov decision process is a
solution to the Bellman equation.  We strengthen their results by adding a
uniqueness result and proving that value function iteration leads to an
optimal policy.  Both of these results are significant from an applied and
computational perspective.  Like \cite{jaskiewicz2011discounted}, we combine
our methodology with the weighted supremum norm approach, so that we can
handle problems that are both unbounded above and unbounded below.

Many other researchers have used transformations of the Bellman
equation, including \cite{rust1987optimal},
\cite{jovanovic1982selection}, \cite{bertsekas2017dynamic},
\cite{ma2018dynamic} and \cite{abbring2018very}.  These transformations are typically
aimed at improving economic intuition, estimation properties or computational efficiency.
The present paper is, to the best of our knowledge, the first to consider
transformations of the Bellman equation designed to solving dynamic
programming problems with unbounded rewards.

The rest of our paper is structured as follows.  Section~\ref{s:ea} starts the
exposition with typical examples. Section~\ref{s:gf} presents theory and
Section~\ref{s:app} gives additional applications.  Most proofs are deferred to
the appendix.

\section{Example Applications}

\label{s:ea}

We first illustrate the methodology for converting unbounded
problems to bounded ones in some common settings.

\subsection{Application 1: Optimal Savings}

\label{ss:os}

Consider an optimal savings problem where a borrowing constrained agent seeks
to solve
\begin{equation*}
	\sup \, \EE \sum_{t = 0}^\infty \beta^t u(c_t)
\end{equation*}
subject to the constraints
\begin{equation}
	\label{eq:bwc}
	0 \leq c_t \leq w_t,
	\quad  
	w_{t+1} = R (w_t - c_t) + y_{t+1}
	\quad \text{and} \quad
	(w_0,y_0) \; \text{given}.
\end{equation}
Here $\beta \in (0,1)$ is the discount factor, $c_t$, $w_t$ and $y_t$ are respectively consumption, wealth and non-financial income at time $t$, $R$ is the rate of return on financial income,\footnote{The timing associated with the wealth constraint in \eqref{eq:bwc} is such
	that $y_{t+1}$ is excluded from the time $t$ information set, as in, say
	\cite{benhabib2015wealth}.  One can modify the second constraint in
	\eqref{eq:bwc} to an alternative timing such as $w_{t+1} = R (w_t - c_t +
	y_t)$ and the arguments below still go through after suitable
	modifications.  An application along these lines is given in
	Section~\ref{ss:are}.}
 and $u$ is the CRRA utility function defined by
\begin{equation}
\label{eq:crra}
    u(c) = \frac{c^{1-\gamma} - 1}{1 - \gamma} \; \text{ with } \, \gamma > 1.
\end{equation}
%
We are focusing on the case $\gamma > 1$ because it is 
the most empirically relevant and, at the same time, the most challenging for
dynamic programming.

Assume that $\{y_t\}$ is a Markov process with state space $\YY \subset \RR_+$
and stochastic kernel $P$ satisfying\footnote{Here $P(y, \cdot \,)$ can be
interpreted as the transition probability. In particular, $P(y,A)$ represents
the probability of transitioning from $y$ to set $A$ in one step. See
Section~\ref{ss:t} for formal definition.}
\begin{equation}
	\label{eq:up}
	\bar u := \inf_{y \in \YY} \int u(y') P(y, \diff y') > - \infty.
\end{equation}
Condition \eqref{eq:up} holds if, say, 
\begin{itemize}
	\item $\{y_t\}$ is a finite state Markov chain taking positive values (see, e.g., \cite{Acikgoz2018} and \cite{cao2018recursive}), or
	\item $\{y_t\}$ is {\sc iid} and $\EE u(y_t) > -\infty$ (see, e.g., \cite{benhabib2015wealth}), or
	\item $\{y_t\}$ is a Markov switching process, say, $y_t = \mu_t + \sigma_t \epsilon_t$, where $\{\epsilon_t\} \iidsim N(0,1)$, while $\{\mu_t\}$ and $\{\sigma_t\}$ are positive and driven by finite state Markov chains (see, e.g., \cite{heathcote2010macroeconomic} and \cite{kaplan2010much}).
\end{itemize}
The Bellman equation of this problem is
\begin{equation}
	v(w, y) = \sup_{0 \leq c \leq w}
	\left\{
	u(c) + \beta \int v(R (w - c) + y', y') P(y,\diff y')
	\right\},
\end{equation}
where $w \in \RR_+$ and $y \in \YY$. Since $c_t \leq w_t$, it is clear that the value function is unbounded below.
Put differently, if $v$ is a candidate value function, then even if $v$ is
bounded, its image
\begin{equation}
	\label{eq:beos}
	Tv(w, y) = 
	\sup_{0 \leq c \leq w}
	\left\{
	u(c) + \beta \int v(R (w - c) + y', y') P(y, \diff y')
	\right\}
\end{equation}
under the Bellman operator is dominated by $u(w)$ plus some finite constant,
and hence $v(w, y) \to -\infty$ as $w \to 0$ for any $y \in \YY$.

Consider, however, the following transformation. Let $s := w - c$ and
\begin{equation}
	\label{eq:gfv}
	g(y,s) := \beta \int v(R s + y', y') P(y, \diff y')
\end{equation}
so that
\begin{equation}
	\label{eq:nv}
	v(w, y) = 
	\sup_{0 \leq s \leq w}
	\left\{
	u(w - s) + g(y,s)
	\right\}.
\end{equation}
We can eliminate the function $v$ from \eqref{eq:nv} 
by using the definition of $g$.  The first step is to evaluate $v$ in
\eqref{eq:nv} at $(R s + y', y')$, which gives 
\begin{equation*}
	v(Rs + y', y') = 
	\sup_{0 \leq s' \leq Rs + y'}
	\left\{
	u(Rs + y' - s') + g(y', s')
	\right\}.
\end{equation*}
Now we take expectations on both sides of the last equality and multiply
by $\beta$ to get
\begin{equation}
	\label{eq:ng}
	g(y,s) = 
	\beta \int
	\sup_{0 \leq s' \leq Rs + y'}
	\left\{
	u(Rs + y' - s') + g(y',s')
	\right\} P(y, \diff y').
\end{equation}
This is a functional equation in $g$. We now introduce a modified Bellman operator $S$ such that any solution $g$ of \eqref{eq:ng} is a fixed point of $S$:
\begin{equation}
	\label{eq:sg}
	Sg(y, s) = 
	\beta \int
	\sup_{0 \leq s' \leq Rs + y'}
	\left\{
	u(Rs + y' - s') + g(y',s')
	\right\} P(y, \diff y').
\end{equation}
Let $\gG$ be the set of bounded measurable functions on $\YY \times \RR_+$.
We claim that $S$ maps $\gG$ into itself and, moreover, is a contraction of
modulus $\beta$ with respect to the supremum norm.

To see that this is so, pick any $g \in \gG$. Then $Sg$ is bounded above, since $\gamma > 1$ implies
\begin{equation*}
	Sg(y,s) \leq \beta (\sup_{c \geq 0} u(c) + \|g\|) \leq \beta \|g\|,
\end{equation*}
where $\| \cdot \|$ is the supremum norm.  More importantly, $Sg$ is bounded below. Indeed,
\begin{align*}
	Sg(y,s) 
	& \geq 
	\beta \int
	\sup_{0 \leq s' \leq Rs + y'}
	\left\{
	u(Rs + y' - s') - \| g \|
	\right\} P(y, \diff y')
	\\
	& =
	\beta \int
	\left\{
	u(Rs + y') - \| g \|
	\right\} P(y, \diff y')
	 \geq 
	\beta \int u(y') P(y, \diff y') - \beta \| g \|
	\geq 
	\beta \, \bar u - \beta \| g \|.
\end{align*}
Finally, $S$ is obviously a contraction mapping, since, for any $g, h \in \GG$, we have
\begin{multline*}
	\left|
	\sup_{s'}
	\left\{
	u(Rs + y' - s') + g(y',s')
	\right\} 
	-
	\sup_{s'}
	\left\{
	u(Rs + y' - s') + h(y', s')
	\right\} 
	\right| 
	\\
	\leq \sup_{s'} | g(y', s') - h(y', s') |
\end{multline*}
and hence
\begin{equation*}
	|Sg(y,s) - Sh(y,s) |
	\leq
	\beta 
	\int \sup_{0 \leq s' \leq Rs + y'}
	| g(y', s') - h(y', s') | P(y, \diff y')
	 \leq \beta
	\| g - h \|.
\end{equation*}
Taking the supremum over all $(y,s) \in \YY \times \RR_+$ yields
\begin{equation*}
	\| Sg - Sh \| \leq \beta \| g - h \|.
\end{equation*}
%
We have now shown that $S$ is a
contractive self-map on $\gG$. Most significant here is that $\gG$ is a space 
of bounded functions. By Banach's contraction mapping theorem,
$S$ has a unique fixed point $g^*$ in $\gG$.  Presumably, we can insert $g^*$
into the right hand side of the ``Bellman equation'' \eqref{eq:ng}, compute
the maximizer at each state and obtain the optimal savings policy.  If a
version of Bellman's principle of optimality applies to this modified Bellman
equation, we also know that policies obtained in this way exactly coincide
with optimal policies, so, if all of these conjectures are correct, we have a
complete characterization of optimality.

A significant amount of theory must be put in place to make the proceeding
arguments work. In particular, the conjectures discussed immediately above
regarding the validity of Bellman's principle of optimality vis-a-vis the
modified Bellman equation are nontrivial, since the transformation in
\eqref{eq:gfv} that maps $v$ to $g$ is not bijective.  As a result, some
careful analysis is required before we can make firm conclusions regarding
optimality.  This is the task of Section~\ref{s:gf}.

A final comment on this application is that, for this particular problem, we
can also use Euler equation methods, which circumvent some of the issues
associated with unbounded rewards (see, e.g., \cite{li2014solving}).
However, these methods are not applicable in many other settings, due to
factors such as 
existence of discrete choices.  The next two applications
illustrate this point.


\subsection{Application 2: Job Search}

\label{ss:jsp}

As in \cite{mccall1970economics}, an unemployed worker can either accept
current job offer $w_t = z_t + \xi_t$ and work at that wage forever or choose
an outside option (e.g., irregular work in the informal
sector) yielding $c_t = z_t + \zeta_t$ and continue to the next period.
Here $z_t$ is a persistent component, while $\xi_t$ and $\zeta_t$
are transient components.  We assume that $\{\xi_t\}$ and $\{\zeta_t\}$ are
{\sc iid} and lognormal, and 
\begin{equation}
	\label{eq:zp}
	\ln z_{t+1} = \rho \ln z_t + \sigma \epsilon_{t+1},
	\quad \{\epsilon_t\} \iidsim N(0, 1).
\end{equation}
The worker's value function satisfies the Bellman equation
\begin{equation}
	\label{eq:mcvf}
	v(w, c, z) = \max 
	\left\{
	\frac{u(w)}{1-\beta},\;
	u(c) + \beta \EE_{z}
	\, v(w', c', z') 
	\right\}.
\end{equation}
Let $u$ be increasing, continuous, and unbounded below with $u(w) = - \infty$ as $w \to 0$. For now, let $u$ be bounded above. Moreover, we assume that
\begin{equation}
	\label{eq:up2}
	\text{either } \; \inf_{z>0} \, \EE_{z} u(w') > - \infty
	\quad \text{or} \quad
	\inf_{z>0} \, \EE_{z} u(c') > - \infty.
\end{equation}
Condition~\eqref{eq:up2} is satisfied if $u$ is CRRA, say,
since then $\EE u(\xi_t)$ and $\EE u(\zeta_t)$ are finite. Note that $v(w,c,z)$ is unbounded
below since utility can be arbitrarily close to $-\infty$.

To shift to a bounded problem, we can proceed in a similar vein to our
manipulation of the Bellman equation in the optimal savings case. First we set
\begin{equation*}
	g(z) := \beta \EE_{z} \, v(w', c', z'),
\end{equation*}
so that \eqref{eq:mcvf} can be written as 
\begin{equation*}
	v(w, c, z) = \max 
	\left\{
	\frac{u(w)}{1-\beta},\;
	u(c) + g(z)
	\right\}.
\end{equation*}
Next we use the definition of $g$ to eliminate $v$ from this last expression,
which leads to the functional equation
\begin{equation}
	\label{eq:hfe}
	g(z) = \beta \EE_{z} 
	\max 
	\left\{
	\frac{u(w')}{1-\beta},\;
	u(c') + g(z')
	\right\}.
\end{equation}
The corresponding fixed point operator is
\begin{equation}
	\label{eq:shfe}
	Sg(z) = \beta \EE_{z} 
	\max 
	\left\{
	\frac{u(w')}{1-\beta},\;
	u(c') + g(z')
	\right\}.
\end{equation}
If $g$ is bounded above then clearly so is $Sg$.
Moreover, if $g$ is bounded below by some constant $M$, then, by Jensen's
inequality,  
\begin{align*}
	Sg(z) 
	& \geq 
	\beta 
	\max 
	\left\{
	\EE_{z}
	\frac{u(w')}{1-\beta}, \;
	\EE_{z} u(c') + M
	\right\}.
\end{align*}
Condition~\eqref{eq:up2} then implies that $Sg$ is also bounded below.

An argument similar to the one adopted above for the optimal savings model
proves that $S$ is a contraction mapping with respect to the supremum
norm on a space of bounded functions (Section~\ref{s:gf} gives details).
Thus, we can proceed down essentially the same path we used for the optimal
savings problem, with the same caveat that the modified Bellman operator $S$
and the original Bellman operator need to have the same connection to
optimality, and all computational issues need to be clarified.

\subsection{Application 3: Optimal Default}

\label{ss:are}

Consider an infinite horizon optimal savings problem with default, in the
spirit of \cite{arellano2008default} and a large related
literature.\footnote{Recent examples include 
\cite{AGAM2019} and \cite{AAHW2019}.}
    A country
with current assets $w_t$ chooses between continuing to participate in
international financial markets and default.  Output 
\begin{equation*}
	y_t = y(z_t, \xi_t)
\end{equation*}
is a function of a persistent component $\{z_t\}$ and an innovation
$\{\xi_t\}$.  The persistent component is a Markov process such as
the one in \eqref{eq:zp} and the transient component $\{\xi_t\}$ is {\sc
	iid}.  To simplify the exposition, we assume that default leads to permanent
exclusion from financial markets, with lifetime value
\begin{equation*}
	v^d(y, z) = \EE \sum_{t = 0}^\infty \beta^t u(y_t).
\end{equation*}
Notice that $v^d$ satisfies the functional equation
\begin{equation*}
    v^d(y,z) = u(y) + \beta \EE_{z} v^d (y',z').
\end{equation*}
The value of continued participation in financial markets is 
\begin{equation*}
	v^c(w, y, z)
	= \sup_{-b \leq w' \leq R(w + y)}
	\left\{
	u(w + y - w'/R) + \beta \EE_{z} \, v(w', y', z')
	\right\},
\end{equation*}
where $b > 0$ is a constant borrowing
constraint and $v$ is the value function satisfying
\begin{equation*}
	v(w, y, z)
	= \max
	\left\{
	v^d(y, z)
	,\, 
	v^c(w, y, z)
	\right\}.
\end{equation*}
The utility function $u$ has the same properties as Section~\ref{ss:jsp}. It is easy to see that $v$ is unbounded below since $u$ can be arbitrarily close to $-\infty$. However, we can convert this into a bounded problem, as the following analysis shows. 

Let $i$ be a discrete choice variable taking values in $\{0,1\}$, with $0$ indicating default and $1$ indicating continued participation. We define
\begin{center}
	$g(z,w',i) := 
	\begin{cases}
	\beta \EE_{z} \, v^d(y', z') & \text{if } \; i=0  \\
	\beta \EE_{z} \, v(w', y', z') & \text{if } \; i=1
	\end{cases}$
\end{center}
%
so that for $-b \leq w' \leq R(w + y)$, we have
\begin{equation*}
	v(w, y, z)
	= \max
	\left\{
	u(y) + g(z,w',0) 
	,\,
	\sup_{w'}
	\left\{
	u(w + y - w'/R) + g(z, w',1)
	\right\}
	\right\}.
\end{equation*}
Eliminating the value function $v$ yields
\begin{equation*}
    g(z,w',0) = \beta \EE_{z} \{u(y') + g (z',w',0)\}
    \quad \text{and}
\end{equation*}
\begin{equation*}
	g(z, w',1) 
	= \beta \EE_{z}
	\max
	\left\{
	u(y') + g(z',w',0) 
	, \,
	\sup_{w''}
	\left\{
	u(w' + y' - w''/R) + g(z',w'',1)
	\right\}
	\right\}, 
\end{equation*}
%
where $-b \leq w'' \leq R(w' + y')$. We can then define the fixed point operator $S$ corresponding to these functional equations.

If $g$ is bounded above by some constant $K$, then $Sg \leq \sup_c u(c) +
K$.  More importantly, if $g$ is bounded below by some constant
$M$, we obtain
\begin{equation*}
    Sg(z,w',0) \geq \beta \EE_{z} u(y') + \beta M 
    \quad \text{and} 
\end{equation*}
\begin{align*}
	Sg(z, w',1)
	& \geq
	\beta \EE_{z}
	\max
	\left\{
	u(y') + M 
	,\,
	u(w' + y' + b/R) + M
	\right\}
	\\
	& =
	\beta \EE_{z}
	\max
	\left\{
	u(y') 
	,\,
	u(w' + y' + b/R)
	\right\}
	+ \beta M.
\end{align*}
Hence, $Sg$ is bounded below by a finite constant if
\begin{equation}
	\label{eq:odbb}
	\inf_{z} \EE_{z} u(y') > -\infty.
\end{equation}
For example, \eqref{eq:odbb} holds if $y_t = z_t + \xi_t$ where $\{z_t\}$ is positive and $\EE u(\xi_t) > -\infty$. 
An argument similar to the one in
Section~\ref{ss:os} now proves that $S$ is a contraction with respect
to the supremum norm (Section~\ref{s:gf} gives details).

\section{General Formulation}

\label{s:gf}

The preceding section showed how some unbounded problems can be converted to
bounded problems by modifying the Bellman equation.
The next step is to confirm the validity of such a
modification in terms of the connection between the modified Bellman equation
and optimal policies.
We do this in a generic dynamic programming setting that
contains the applications given above.

\subsection{Theory}
\label{ss:t}

For a given set $E$, let $\bB(E)$ be the Borel subsets of $E$.  For our
purpose, a dynamic program consists of
\begin{itemize}
	\item a nonempty set $\XX$ called the \textit{state space},
	\item a nonempty set $\AA$ called the \textit{action space},
	\item a nonempty correspondence $\Gamma$ from $\XX$ to $\AA$ called the \textit{feasible correspondence}, along with the associated set of \textit{state action pairs}
	\begin{equation*}
	    \DD := \{ (x,a) \in \XX \times \AA : a \in \Gamma (x) \},
	\end{equation*}
	%
	\item a measurable map $r: \DD \to \RR \cup \{-\infty\}$ called the \textit{reward function},
    \item a constant $\beta \in (0,1)$ called the \textit{discount factor}, and 
    \item a \textit{stochastic kernel} $Q$ governing the evolution of states.\footnote{Here a 
    	\textit{stochastic kernel} corresponding to our controlled Markov process $\{(x_t,a_t)\}$ is a mapping $Q: \DD \times \bB(\XX) \to [0,1]$ such that (i) for each $(x,a) \in \DD$, $A \mapsto Q(x,a, A)$ is a probability measure on $\bB(\XX)$, and (ii) for each $A \in \bB(\XX)$, $(x,a) \mapsto Q(x,a,A)$ is a measurable function.}
\end{itemize}
Each period, an agent observes a state $x_t \in \XX$ and responds with an
action $a_t \in \Gamma(x_t) \subset \AA$. The agent then obtains a reward
$r(x_t, a_t)$, moves to the next period with a new state $x_{t+1}$, and
repeats the process by choosing $a_{t+1}$ and so on. The state process updates
according to $x_{t+1} \sim Q (x_t, a_t, \cdot \,)$.

Let $\Sigma$ denote the set of \textit{feasible policies}, which we assume to
be nonempty and define as all measurable maps $\sigma: \XX \to \AA$ satisfying
$\sigma (x) \in \Gamma(x)$ for all $x \in \XX$.  Given any policy $\sigma \in
\Sigma$ and initial state $x_0 = x \in \XX$, the \textit{$\sigma$-value
function} $v_\sigma$ is defined by
%
\begin{equation*}
	v_\sigma (x) = \sum_{t=0}^\infty \beta^t \EE_x r(x_t, \sigma(x_t)).
\end{equation*}
We understand $v_\sigma (x)$ as the lifetime value of following policy $\sigma$ now and forever, starting from current state $x$.  

The \textit{value function} associated with this dynamic program is defined at each $x \in \XX$ by 
\begin{equation}
	\label{eq:szvf}
	v^*(x) = \sup_{\sigma \in \Sigma} v_{\sigma} (x).
\end{equation}
A feasible policy $\sigma^*$ is called \textit{optimal} if $v_{\sigma^*} =
v^*$ on $\XX$. The objective of the agent is to find an optimal policy that
attains the maximum lifetime value. 

To handle rewards that are unbounded above as well as below, we introduce a
weighting function $\kappa$, which is a measurable function mapping $\XX$ to
$[1, \infty)$. Let $\gG$ be the set of measurable functions $g: \DD \to \RR$
such that $g$ is bounded below and 
\begin{equation}
    \|g\|_\kappa := \sup_{(x,a) \in \FF} \, \frac{|g(x,a)|}{\kappa (x)} < \infty.
\end{equation}
The pair $(\gG, \|\cdot\|_\kappa)$ is a Banach space (see, e.g., \cite{bertsekas2013abstract}).
Moreover, at each $x \in \XX$ and $(x,a) \in \DD$, we define 
\begin{equation}
\label{eq:rbar}
    \bar r (x) := \sup_{a \in \Gamma (x)} r (x,a)
    \quad \text{and} \quad
    \ell (x,a) := \EE_{x,a} \bar r (x').
\end{equation}

\begin{assumption}
	\label{a:ws}
	There exist constants $d \in  \RR_+$ and $\alpha \in (0, 1 / \beta)$ such that $\bar r (x) \leq d \kappa (x)$ and
	$\EE_{x,a} \, \kappa (x') \leq \alpha \kappa(x)$ for all $(x,a) \in \DD$.
\end{assumption}

Assumption~\ref{a:ws} relaxes the standard weighted supremum norm assumptions (see, e.g., \cite{wessels1977markov} or \cite{bertsekas2013abstract}), in the sense that the reward function is allowed to be unbounded from below.

Next, we define  $S$ on $\gG$ as 
\begin{equation}
	\label{eq:rfba}
	S g(x,a) := \beta \EE_{x,a} \sup_{a' \in \Gamma(x')}
	\left\{ 
	    r \left( x', a' \right) + g(x',a')
	\right\}.
\end{equation}
Given $g \in \gG$, a feasible policy $\sigma$ is called \emph{$g$-greedy} if
\begin{equation} 
	\label{eq:greedy}
	r(x, \sigma(x)) + g(x, \sigma(x))
	= \sup_{a \in \Gamma(x)} \{ r(x,a) + g(x,a) \}
	\quad \text{for all $x \in \XX$.}
\end{equation}
Although the reward function is potentially unbounded below, the dynamic
program can be solved by the operator $S$, as the following theorem shows.

\begin{theorem}
	\label{t:cs}
	If Assumption~\ref{a:ws} holds and $\ell$ is bounded below, then 
	\begin{enumerate}
		\item $S \gG \subset \gG$ and $S$ is a contraction mapping on $(\gG, \| \cdot \|_\kappa)$.
		\item $S$ admits a unique fixed point $g^*$ in $\gG$. 
		\item $S^k g$ converges to $g^*$ at rate $O((\alpha \beta)^k)$ under $\| \cdot \|_{\kappa}$.
		\item If there exists a closed subset $\GG$ of $\gG$ such that $S \GG
            \subset \GG$ and a $g$-greedy policy exists for each $g \in \GG$,
            then, in addition,
		\begin{enumerate}
			\item $g^*$ is an element of $\GG$ and satisfies 
			\begin{equation*}
                \qquad
				g^*(x,a) = \beta \EE_{x,a} v^*(x') 
				\quad \text{and} \quad
				v^* (x) = \max_{a \in \Gamma(x)} \left\{ r(x,a) + g^*(x,a) \right\}.
			\end{equation*}
			\item At least one optimal policy exists.
			\item A feasible policy is optimal if and only if it is $g^*$-greedy.
		\end{enumerate}
	\end{enumerate}
\end{theorem}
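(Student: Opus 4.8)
The plan is to obtain (1)--(3) from Banach's fixed point theorem with little effort and to concentrate on (4), where the modified operator $S$ must be reconciled with Bellman's principle of optimality despite values being $-\infty$ in general. For (1), fix $g \in \gG$ and set $\underline{g} := \inf_{\DD} g$, which is finite since $g$ is bounded below. Using $r \leq \bar r$ and $g \leq \|g\|_\kappa\kappa$ on $\DD$ together with Assumption~\ref{a:ws}, one gets $Sg(x,a) \leq \beta \EE_{x,a}[\bar r(x') + \|g\|_\kappa\kappa(x')] \leq \alpha\beta(d+\|g\|_\kappa)\kappa(x)$, while $Sg(x,a) \geq \beta\EE_{x,a}[\bar r(x') + \underline{g}] = \beta\ell(x,a) + \beta\underline{g}$, which is bounded below because $\ell$ is; measurability of $Sg$ is routine, so $Sg \in \gG$. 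For the contraction estimate I would use that, for each $x'$,
\[
\Big|\sup_{a'}\{r(x',a') + g(x',a')\} - \sup_{a'}\{r(x',a') + h(x',a')\}\Big| \leq \sup_{a' \in \Gamma(x')}|g(x',a') - h(x',a')| \leq \|g-h\|_\kappa\kappa(x'),
\]
integrate against $Q$, and apply the second bound in Assumption~\ref{a:ws} to get $\|Sg - Sh\|_\kappa \leq \alpha\beta\|g-h\|_\kappa$ with $\alpha\beta < 1$. Since $(\gG,\|\cdot\|_\kappa)$ is complete, Banach's theorem gives the unique fixed point $g^*$ of (2), and $\|S^k g - g^*\|_\kappa = \|S^k g - S^k g^*\|_\kappa \leq (\alpha\beta)^k\|g-g^*\|_\kappa$ gives (3).

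For (4), since $\GG$ is closed in the complete space $\gG$ and $S\GG \subset \GG$, the restriction of $S$ to $\GG$ is a contraction on a complete space and so has a fixed point there; by uniqueness in $\gG$ this fixed point is $g^*$, so $g^* \in \GG$. Because a $g^*$-greedy policy exists, $\tilde v(x) := \max_{a\in\Gamma(x)}\{r(x,a) + g^*(x,a)\}$ is well defined, and $g^* = Sg^*$ yields $g^*(x,a) = \beta\EE_{x,a}\tilde v(x')$, hence $\tilde v(x) = \max_{a\in\Gamma(x)}\{r(x,a) + \beta\EE_{x,a}\tilde v(x')\}$. The entire content of (4) reduces to the identity $\tilde v = v^*$: granting it, (a) is immediate, (b) follows because the $g^*$-greedy policy $\sigma^*$ constructed below satisfies $v_{\sigma^*} = \tilde v = v^*$, and (c) is taken up last.

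To prove $\tilde v \geq v^*$, fix $\sigma \in \Sigma$ and telescope $\tilde v(x_t) \geq r(x_t,\sigma(x_t)) + \beta\EE_{x_t,\sigma(x_t)}\tilde v(x_{t+1})$ to obtain, for every $n$, $\tilde v(x) \geq \sum_{t=0}^{n-1}\beta^t\EE_x r(x_t,\sigma(x_t)) + \beta^n\EE_x\tilde v(x_n)$. Assumption~\ref{a:ws} makes the positive part of the reward series summable, so the partial sums converge to $v_\sigma(x)$ in $[-\infty,\infty)$; the crucial observation is that $\EE_{x,a}\tilde v(x') = \beta^{-1}g^*(x,a) \geq \beta^{-1}\underline{g}$, whence $\beta^n\EE_x\tilde v(x_n) \geq \beta^{n-1}\underline{g} \to 0$, and letting $n\to\infty$ gives $\tilde v(x) \geq v_\sigma(x)$; hence $\tilde v \geq v^*$. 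For the reverse inequality let $\sigma^*$ be $g^*$-greedy; the same telescoping then holds with equality, and $\beta^n\EE_x\tilde v(x_n)$ is squeezed between $\beta^{n-1}\underline{g}$ and $(\alpha\beta)^n(d+\|g^*\|_\kappa)\kappa(x)$, both tending to $0$, so $v_{\sigma^*}(x) = \tilde v(x)$; feasibility of $\sigma^*$ gives $v^* \geq v_{\sigma^*} = \tilde v$. Therefore $\tilde v = v^* = v_{\sigma^*}$, which proves (a) and (b).

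For (c), the argument just given shows that every $g^*$-greedy policy is optimal; conversely, if $\sigma$ is optimal then $v_\sigma = v^* = \tilde v$, and combining this with the identity $v_\sigma(x) = r(x,\sigma(x)) + \beta\EE_{x,\sigma(x)}v_\sigma(x')$ gives $\tilde v(x) = r(x,\sigma(x)) + g^*(x,\sigma(x))$, so $\sigma(x)$ attains the maximum defining $\tilde v(x)$ and $\sigma$ is $g^*$-greedy. The main obstacle throughout lies in (4): one has to carry the telescoping and the passage to the limit through quantities that may equal $-\infty$, and, more importantly, exploit the fact that although $v^* = \tilde v$ is unbounded below, its one-step conditional mean $\EE_{x,a}v^*(x') = \beta^{-1}g^*(x,a)$ is uniformly bounded below precisely because $g^* \in \gG$; this is exactly what forces the transversality term $\beta^n\EE_x\tilde v(x_n)$ to vanish, and it is the mechanism by which the bounded transformed problem recovers the unbounded original one.
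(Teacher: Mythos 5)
Your proof is correct and, for parts (1)--(3), essentially identical to the paper's: the same lower bound $Sg \geq \beta(\ell + \underline{g})$, the same $\kappa$-weighted upper bound, the same sup-difference inequality giving modulus $\alpha\beta$, and Banach's theorem. For part (4) you take a genuinely more self-contained route. The paper does not define $\tilde v$ by hand; it invokes Proposition~2 of \cite{ma2018dynamic} to obtain a fixed point $\bar v = M W_1 g^*$ of the Bellman operator $T = M W_1 W_0$ with $g^* = W_0 \bar v$, then runs the same telescoping argument you use (with the remainder killed by $|g^*| \leq \|g^*\|_\kappa \kappa$ and $\EE_{x,a}\kappa(x') \leq \alpha\kappa(x)$) to show $\bar v = v^*$, and finally delegates existence of optimal policies and the ``optimal iff greedy'' equivalence to Theorem~1 of that same reference. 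You instead derive the Bellman equation for $\tilde v := M W_1 g^*$ directly from $Sg^* = g^*$, prove $\tilde v = v^* = v_{\sigma^*}$ by the same telescoping (your two-sided control of the transversality term, $\beta^{n-1}\underline{g} \leq \beta^n \EE_x \tilde v(x_n) \leq (\alpha\beta)^n(d+\|g^*\|_\kappa)\kappa(x)$, is a nice touch and slightly cleaner than the paper's one-sided bound plus greedy equality), and then prove part (c) by hand. What your directness buys is independence from the cited results; what it costs is that the converse of (c) leans on the recursive identity $v_\sigma(x) = r(x,\sigma(x)) + \beta\EE_{x,\sigma(x)}v_\sigma(x')$, which you assert without justification. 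It does hold here -- Assumption~\ref{a:ws} makes the positive parts of the reward series summable, so the interchange of sum and conditional expectation goes through by splitting into positive and negative parts -- but this is precisely the content the paper outsources to Theorem~1 of \cite{ma2018dynamic}, so in a full write-up you should supply that Fubini/monotone-convergence step (and note that measurability of $x \mapsto \sup_{a}\{r(x,a)+g^*(x,a)\}$ is being taken as part of the definition of $S$, as the paper also implicitly does).
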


\subsection{Sufficient Conditions}
\label{ss:suff}

Consider a dynamic programming problem 
\begin{equation}
    \label{eq:sc_obj}
    \max \, \EE \sum_{t=0}^\infty \beta^t r(w_t, s_t)
\end{equation}
subject to
\begin{equation}
	\label{eq:ubdd_st}
	0 \leq s_t \leq w_t, \quad 
	w_{t+1} = f(s_t, \eta_{t+1}), \quad
	\eta_t = h (z_t, \epsilon_t)
	\quad \text{and} \quad 
	(w_0, z_0) \; \text{given}.
\end{equation}
Here $z$ and $\epsilon$ correspond respectively to a Markov process $\{z_t\}$ on $\ZZ$ and an {\sc iid} process $\{\epsilon_t\}$, $f$ and $h$ are nonnegative continuous functions, and $f$ is increasing in $s$. Furthermore, $\mathsf{Z}$ and the range space of $\{\eta_t\}$ are Borel subsets of finite-dimensional Euclidean spaces, and the stochastic kernel $P$ corresponding to $\{z_t\}$ is Feller.\footnote{In other words,
    $z \mapsto \int h(z') P(z,\diff z')$ is bounded and continuous whenever $h$ is.}

This problem can be placed in our framework by setting
\begin{equation*}
    x := (w,z), \quad
    a := s, \quad
    \XX := \RR_+ \times \ZZ, \quad 
    \AA := \RR_+, \quad
    \Gamma(x) := [0, w] 
\end{equation*}
\begin{equation*}
    \text{and} \quad
    \DD := \left\{ (w,z,s) \in \RR_+ \times \ZZ \times \RR_+ : 0 \leq s \leq w \right\}.
\end{equation*}

Suppose that the reward function $r: \DD \to \RR \cup \{ -\infty \}$ is increasing in $w$ and decreasing in $s$, $r$ is continuous on the interior of $\DD$ and, if $r$ is bounded below, it is continuous. 

Recall $\kappa$ defined in Assumption~\ref{a:ws}. Let
\begin{equation*}
    \underline{\ell} (z) := \EE_{z} r (f(0,\eta'), 0) 
    \quad \text{and} \quad
    \kappa_e (z, s) := \EE_{z,s} \kappa(w',z').
\end{equation*}
Let $\GG$ be the set of functions $g$ in $\gG$ that is increasing in its last argument and continuous. 
Notice that, in the current setting, $S$ defined on $\GG$ is given by
\begin{equation*}
    Sg (z,s) = \beta \EE_{z, s} \max_{s' \in [0,w']} \left\{
        r(w',s') + g(z',s')
    \right\}.
\end{equation*}
Theorem~\ref{t:cs} is applicable in the current setting, as the following result illustrates.

\begin{proposition}
	\label{pr:suff}
	If Assumption~\ref{a:ws} holds for some continuous functions $\kappa$ and $\kappa_e$, and $\underline{\ell}$ is continuous and bounded below,   
    then $S$ is a contraction mapping on $(
    \GG, \| \cdot \|_{\kappa} )$ and the conclusions of Theorem~\ref{t:cs}
    hold.
\end{proposition}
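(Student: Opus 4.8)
The plan is to show that the problem \eqref{eq:sc_obj}--\eqref{eq:ubdd_st} is an instance of the abstract dynamic program of Section~\ref{ss:t} for which every hypothesis of Theorem~\ref{t:cs}, including the hypothesis of part~(4) with the displayed set $\GG$, is satisfied. First I would record the embedding: with $x=(w,z)$, $a=s$, $\XX=\RR_+\times\ZZ$, $\AA=\RR_+$, $\Gamma(x)=[0,w]$ and $\DD$ as in the text, the laws of motion $z'\sim P(z,\cdot)$, $\eta'=h(z',\epsilon')$, $w'=f(s,\eta')$ with $\{\epsilon_t\}$ {\sc iid} induce a stochastic kernel $Q((w,z),s,\cdot)$ whose dependence on the state runs only through $(z,s)$; measurability of $Q$ and $r$ follows from continuity of $f,h$ and the stated measurability of $r$. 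Hence the operator $S$ of \eqref{eq:rfba}, applied to any function depending on the state only through $(z,s)$, reduces to the displayed form $Sg(z,s)=\beta\EE_{z,s}\max_{s'\in[0,w']}\{r(w',s')+g(z',s')\}$, and $\GG$ is a nonempty subset of $\gG$ (it contains the constants).

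Next I would check the hypotheses of Theorem~\ref{t:cs}. Assumption~\ref{a:ws} is assumed. Since $r$ is decreasing in $s$, the supremum defining $\bar r$ is attained at $s=0$, so $\bar r(w,z)=r(w,0)$; because $f$ is increasing in $s$ and $r$ is increasing in $w$, for each realization of $\eta'$ we have $r(f(s,\eta'),0)\geq r(f(0,\eta'),0)$, and integrating gives $\ell(w,z,s)=\EE_{z,s} r(w',0)\geq\EE_z r(f(0,\eta'),0)=\underline\ell(z)\geq\inf_{\ZZ}\underline\ell>-\infty$. Thus $\ell$ is bounded below, and parts (1)--(3) of Theorem~\ref{t:cs} apply: $S$ is a contraction of modulus $\alpha\beta$ on $(\gG,\|\cdot\|_\kappa)$ with a unique fixed point $g^*$ to which $S^kg$ converges at rate $O((\alpha\beta)^k)$.

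It remains to verify the extra hypothesis of Theorem~\ref{t:cs}(4) for the set $\GG$ given before the proposition: $\GG$ closed in $(\gG,\|\cdot\|_\kappa)$, $S\GG\subset\GG$, and a $g$-greedy policy existing for each $g\in\GG$ (this also makes $S$ a contraction on the complete metric space $(\GG,\|\cdot\|_\kappa)$). For closedness I would use that $\|\cdot\|_\kappa$-convergence forces uniform convergence on every compact subset of $\DD$ --- this is where continuity of $\kappa$ enters, ensuring $\kappa$ is bounded on compacts --- so that a $\|\cdot\|_\kappa$-limit of functions that are continuous and increasing in $s$ is again continuous and increasing in $s$. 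For invariance: $Sg\in\gG$ by part~(1); monotonicity of $Sg$ in $s$ holds because raising $s$ raises $w'=f(s,\eta')$ pathwise, which at once enlarges the feasible set $[0,w']$ and lifts the objective $r(w',s')+g(z',s')$ (as $r$ increases in $w'$ and $g$ increases in $s'$), so the inner maximum is nondecreasing in $s$ before and after integration; continuity of $Sg$ I would obtain by applying a maximum-theorem argument to $M(w',z'):=\max_{s'\in[0,w']}\{r(w',s')+g(z',s')\}$ and then a generalized dominated convergence argument to $(z,s)\mapsto\beta\EE_{z,s}M(w',z')$ that invokes the Feller property of $P$, continuity of $f$ and $h$, and continuity of $\kappa$, $\kappa_e$ and $\underline\ell$ to dominate the integrand by an envelope with continuous integral. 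For the greedy policy, $s\mapsto r(w,s)+g(z,s)$ is upper semicontinuous on the compact set $[0,w]$, so its maximum is attained, and the measurable maximum theorem produces a measurable $g$-greedy selection. Granting these three facts, Theorem~\ref{t:cs}(4) yields the remaining conclusions.

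The step I expect to be the main obstacle is the continuity of $Sg$ when $r$ is unbounded below, for then $r$ is only assumed continuous on the interior of $\DD$ and the objective $r(w',s')+g(z',s')$ can tend to $-\infty$ as $s'\to w'$ (or as $w'\to 0$), so Berge's theorem is not directly available on $\DD$. The remedy is to lean on the monotone structure: $M(\cdot,z')$ is nondecreasing in $w'$, and for $w'$ ranging over a compact set bounded away from $0$ the maximizing $s'$ stays bounded away from the endpoint $w'$ (near which the objective falls below its value at $s'=0$, finite there), so the maximization effectively takes place in the interior where $r$ is continuous; the remaining approach to $w'=0$ is handled by monotone limits. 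A more routine, but still necessary, part of this step is assembling the integrable envelope for the dominated convergence argument from Assumption~\ref{a:ws} and the continuity of $\kappa$, $\kappa_e$, $\underline\ell$ --- which is precisely where those continuity requirements in the statement are used.
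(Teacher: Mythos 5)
Your proposal is correct and follows essentially the same route as the paper: bound $\ell$ below via $\bar r(w,z)=r(w,0)$ and monotonicity of $f$ and $r$, verify closedness of $\GG$, obtain greedy selections and the monotonicity/continuity of the inner maximum $h_g$ by a Berge-type argument adjusted for $-\infty$ values (the paper's Lemma~\ref{lm:well_def}), and establish continuity of $Sg$ through the envelope $r(f(0,\eta'),0)-\alpha_1\leq h_g\leq \alpha_2\kappa$ together with the Feller property and a Fatou/dominated-convergence argument, which is exactly where continuity of $\kappa$, $\kappa_e$ and $\underline{\ell}$ is used in the paper as well. The obstacle you flag (continuity of the maximum when $r$ is unbounded below and only continuous on the interior of $\DD$) is precisely what the paper's Lemma~\ref{lm:well_def}(b) resolves, in the same spirit as your sketch.
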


\section{Applications}
\label{s:app}

In this section, we complete the discussion of all applications in
Section~\ref{s:ea}. We also extend the optimality results of
\cite{benhabib2015wealth} by adding a persistent component to labor income and
returns.

\subsection{Optimal Savings (Continued)}
\label{ss:os_cont}

Recall the optimal savings problem of Section~\ref{ss:os}. This problem can be placed into the framework of Section~\ref{ss:suff} by letting
\begin{equation*}
    \eta = z := y, \quad
    r(w,s) := u(w - s), \quad
    f(s, \eta') := R s + \eta'
    \quad \text{and} \quad
    h (z, \epsilon) := z.
\end{equation*}
To establish the desired properties, it remains to verify the conditions of
Proposition~\ref{pr:suff}. Since we have shown that $S \gG \subset \gG$, where
$\gG$ is the set of bounded measurable functions on $\YY \times \RR_+$, we can
simply set $\kappa \equiv 1$ such that Assumption~\ref{a:ws} holds. In this
case, both $\kappa$ and $\kappa_e$ are continuous functions. Moreover, note
that 
\begin{equation*}
    \underline{\ell} (y) = \EE_{y} u(y') 
    = \int u(y') P(y, \diff y'),
\end{equation*}
which is bounded below by \eqref{eq:up}. As a result, all the conclusions of Theorem~\ref{t:cs} hold as long as $y \mapsto \int u(y') P(y, \diff y')$ is continuous. In particular, when this further condition holds, $S$ is a contraction mapping on $(\GG, \| \cdot\|)$ with unique fixed point $g^*$, and a feasible policy is optimal if and only if it is $g^*$-greedy. Here $\GG$ is the set of bounded continuous functions on $\YY \times \RR_+$ that is increasing in its last argument.

\subsection{Job Search (Continued)}
\label{ss:js_cont}

Recall the job search problem of Section~\ref{ss:jsp}. This problem fits into the framework of Section~\ref{ss:t} if we let the  $a$ be a discrete choice variable taking values in $\{0,1\}$, where $0$ denotes the decision to stop and $1$ represents the decision to continue,
\begin{equation*}
    x := (w,z,c), \;\; 
    \XX := (0, \infty)^3, \;\;
    \AA := \{0,1\}, \;\;
    \Gamma(x) := \{0,1\}, \;\;
    \DD := (0, \infty)^3 \times \{0,1\}
\end{equation*}
and the reward function $r(x,a)$ be
\begin{equation*}
    r(w,c,a) := \frac{u(w)}{1-\beta} 
    \; \text{ if } \; a = 0
    \quad \text{and} \quad
    r(w,c,a) := u(c) 
    \; \text{ if } \; a = 1.
\end{equation*}
We have shown that $S \gG \subset \gG$, where $\gG$ is the set of bounded measurable functions on $(0, \infty)$. Hence, Assumption~\ref{a:ws} holds with $\kappa \equiv 1$. Note that in this case, the function $\ell (x,a)$ reduces to
\begin{equation*}
    \ell (z) = \EE_{z} \max \left\{ u(w') / (1 - \beta), u(c') \right\}.
\end{equation*}
Then $\ell$ is bounded below by Jensen's inequality and \eqref{eq:up2}. Since in addition the action set is finite, a $g$-greedy policy always exists for all $g \in \gG$. 
Let $\GG := \gG$. The analysis above implies that all the conclusions of Theorem~\ref{t:cs} hold.

\subsection{Optimal Default (Continued)}
\label{ss:od_cont}

Recall the optimal default problem studied in Section~\ref{ss:are}. This setting is a special case of our framework. In particular, 
\begin{equation*}
    x := (w, y, z), \;\;\;
    a := (w', i), \;\;\;
    \XX := [-b, \infty) \times \YY \times \ZZ
    \;\;\; \text{and} \;\;\;
    \AA := [-b, \infty) \times \{ 0,1 \},
\end{equation*}
where $i$ is a discrete choice variable taking values in $\{0,1\}$, and $\YY$ and $\ZZ$ are respectively the range spaces of $\{y_t\}$ and $\{z_t\}$. The reward function $r$ reduces to
\begin{center}
	$r(w,y,w', i) := 
	\begin{cases}
	    u(y) & \text{if } \; i=0,  \\
	    u(w + y - w'/R) & \text{if } \; i=1.
	\end{cases}$
\end{center}
Since $S \gG \subset \gG$, where $\gG$ is the set of bounded measurable functions on $\ZZ \times [-b, \infty) \times \{0,1\}$, Assumption~\ref{a:ws} holds for $\kappa \equiv 1$. Moreover, $\ell$ satisfies
\begin{equation*}
    \ell (z, w') = \EE_{z} \max \left\{ u(y'), u \left( w' + y' + b/R \right) \right\}
    \geq \EE_z u(y'),
\end{equation*}
which is bounded below by \eqref{eq:odbb}. Let $\GG$ be the set of functions
in $\gG$ that is increasing in its second-to-last argument and continuous.
Through similar steps to the proof of Proposition~\ref{pr:suff}, one can show
that $S \GG \subset \GG$ and a $g$-greedy policy exists for all $g \in \GG$.
As a result, all the conclusions of Theorem~\ref{t:cs} are true.

\subsection{Optimal Savings with Capital Income Risk}
\label{ss:ops_cir}

Consider an optimal savings problem with capital income risk (see, e.g., \cite{benhabib2015wealth}). The setting is similar to that of Section~\ref{ss:os}, except that the rate of return to wealth is stochastic. In particular, the constraint \eqref{eq:bwc} now becomes
\begin{equation*}
    0 \leq s_t \leq w_t, \quad
    w_{t+1} = R_{t+1} s_t + y_{t+1}
    \quad \text{and} \quad
    (w_0, z_0) \text{ given}.
\end{equation*}
where $w_t$ is wealth, $s_t$ is the amount of saving, while $R_t$ and $\{y_t\}$ are respectively the rate of return to wealth and the non-financial income that satisfy
\begin{equation*}
    R_t = h_R (z_t, \xi_t) 
    \quad \text{and} \quad
    y_t = h_y (z_t, \zeta_t).
\end{equation*}
Here $\{z_t\}$ is a finite state Markov chain, and $\{\xi_t\}$ and $\{\eta_t\}$ are {\sc iid} innovation processes. The importance of these features for wealth dynamics is highlighted in \cite{fagereng2016heterogeneityNBER} and \cite{hubmer2018comprehensive}, among others.

This problem fits into the framework of Section~\ref{ss:suff} by setting 
\begin{equation*}
    \eta := (R, y), \quad
    \epsilon_t := (\xi_t, \zeta_t), \quad
    r(w,a) := u(w-s)
    \quad \text{and} \quad
    f(s, \eta') := Rs + y'.
\end{equation*}
%
%
In this case, $\underline{\ell} (z) = \EE_{z} u(y')$ and 
\begin{equation*}
	\label{eq:opsav_I_be1}
	Sg(z, s) = \beta \EE_{z,s} \max_{s' \in [0,w']} 
	\left\{ u(w' - s') + g(z',s') \right\}.
\end{equation*}
Consider, for example, the CRRA utility in \eqref{eq:crra}. In this case, Assumption~\ref{a:ws} holds with $\kappa \equiv 1$, and $\GG$ reduces to the set of bounded continuous functions on $\ZZ \times \RR_+$ that is increasing in its last argument. The conclusions of Theorem~\ref{t:cs} hold if $z \mapsto \EE_{z} u(y')$ is continuous and bounded below.

\section{Appendix}
\label{s:appendix}


Let $\vV$ (resp., $\VV$) be the set of measurable functions $v: \XX \to \RR
\cup \{-\infty\}$ such that $(x,a) \mapsto \beta \EE_{x,a} v(x')$ is in $\gG$
(resp., $\GG$), and let $\hH$ (resp., $\HH$) be the set of measurable
functions $h:\DD \to \RR \cup \{ -\infty \}$ such that $h = r + g$ for some
$g$ in $\gG$ (resp., $\GG$). Next, we define the operators $W_0$, $W_1$ and
$M$ respectively on $\vV$, $\gG$ and $\hH$ as
\begin{equation*}
	W_0 v (x,a) := \beta \EE_{x,a} v(x'),
	\quad
	W_1 g (x,a) := r(x,a) + g(x,a),
\end{equation*}
\begin{equation*}
    \quad \text{and} \quad
    M h (x) := \sup_{a \in \Gamma(x)} h (x,a). 
\end{equation*}
Then $S$ in \eqref{eq:rfba} satisfies $S = W_0 M W_1$ on $\gG$.

\begin{proof}[Proof of Theorem~\ref{t:cs}]
    To see claim (1) holds, we first show that $S \gG \subset \gG$. Fix $g \in \gG$. By the definition of $\gG$, there is a lower bound $\underline{g} \in \RR$ such that $g \geq \underline{g}$. Then
    \begin{align*}
        S g (x,a) 
        &\geq \beta \EE_{x,a} \sup_{a' \in \Gamma(x')} \left\{
        r(x',a') +  \underline{g} \right\}   \\
        &= \beta  \left[ \EE_{x,a} \sup_{a' \in \Gamma(x')} r(x',a') + \underline{g} \right]
        = \beta \left[\EE_{x,a} \bar{r} (x') + \underline{g} \right] 
        = \beta \left[ \ell (x,a) + \underline{g} \right].
    \end{align*}
    Since by assumption $\ell$ is bounded below, so is $Sg$. Moreover, by Assumption~\ref{a:ws},
    \begin{align*}
        S g (x,a) &\leq \beta \EE_{x,a} \left\{ \bar r (x')
        + \sup_{a' \in  \Gamma(x')} g(x',a') \right\}    \\
        &\leq \beta \EE_{x,a} \left\{ (d +  \|g\|_\kappa) \kappa (x') \right\}    
        \leq \alpha \beta (d + \|g\|_\kappa) \kappa (x) 
    \end{align*}
    for all $(x,a ) \in \DD$. Hence, $Sg / \kappa$ is bounded above. Since in addition $Sg$ is bounded below and $\kappa \geq 1$, we have $\|Sg\|_\kappa < \infty$. We have now shown that $Sg \in \gG$.
    
    Next, we show that $S$ is a contraction mapping on $(\gG, \| \cdot \|_\kappa)$. Fix $g_1, g_2 \in \gG$. Note that for all $(x,a) \in \DD$, we have
    \begin{align*}
        &|S g_1 (x,a) - S g_2 (x,a)|    \\
        &=\left| \beta \EE_{x,a} \sup_{a' \in  \Gamma(x')} \{ r(x',a') +  g_1 (x',a') \} - 
        \beta \EE_{x,a} \sup_{a' \in  \Gamma(x')} \{ r(x',a') +  g_2 (x',a') \}  \right|    \\
        & \leq \beta \EE_{x,a} 
        \left| \sup_{a' \in  \Gamma(x')} \{ r(x',a') + g_1 (x',a') \} - 
        \sup_{a' \in  \Gamma(x')} \{ r(x',a') + g_2 (x',a') \} \right|    \\
        & \leq \beta \EE_{x,a} \sup_{a' \in \Gamma(x')}  
        \left| g_1(x', a') - g_2 (x',a') \right| 
        \leq \beta \| g_1 - g_2\|_\kappa \EE_{x,a} \kappa(x')
        \leq \alpha \beta \|g_1 - g_2 \|_\kappa \kappa(x),
    \end{align*}
    where the last inequality follows from Assumption~\ref{a:ws}. Then we have $\|S g_1 - S g_2 \|_\kappa \leq \alpha \beta \| g_1 - g_2 \|_\kappa$. Since $\alpha \beta < 1$, $S$ is a contraction mapping on $(\gG, \| \cdot \|_\kappa)$ and claim~(1) is verified.
    
    Claims~(2)--(3) follow immediately from claim~(1) and the Banach contraction mapping theorem. Regarding claim~(4), since $\GG$ is a closed subset of $\gG$ and $S \GG \subset \GG$, $S$ is also a contraction mapping on $(\GG, \| \cdot \|_\kappa)$ and the unique fixed point $g^*$ of $S$ is indeed in $\GG$. Based on Proposition~2 of \cite{ma2018dynamic}, the Bellman operator $T := M W_1 W_0$ maps elements of $\VV$ into itself and has a unique fixed point $\bar v$ in $\VV$ that satisfies $\bar v = M W_1 g^*$ and $g^* = W_0 \bar v$. 
    
    To verify part~(a) of claim~(4), it remains to show that $\bar v = v^*$.
    For all $x_0 \in \XX$ and $\sigma \in \Sigma$, we have
    \begin{align}
    \label{eq:vbar}
    	\bar v(x_0) &\geq r(x_0, \sigma(x_0)) + \beta \EE_{x_0,\sigma(x_0)} \bar{v} (x_1)  \nonumber  \\
    	& \geq r(x_0, \sigma(x_0)) + \beta \EE_{x_0,\sigma(x_0)} 
    	\left\{ r(x_1, \sigma(x_1)) + \beta \EE_{x_1,\sigma(x_1)} \bar{v} (x_2) \right\}  \nonumber  \\
    	&= r(x_0, \sigma(x_0)) + \beta \EE_{x_0,\sigma(x_0)} 
    	r(x_1, \sigma(x_1)) + \beta^2 \EE_{x_0,\sigma(x_0)} \EE_{x_1,\sigma(x_1)} \bar{v} (x_2)  \nonumber  \\
    	&\geq \sum_{t=0}^T \beta^t \EE_{x_0,\sigma(x_0)} \cdots \EE_{x_{t-1},\sigma(x_{t-1})} r(x_t, \sigma(x_t))  
    	 + \beta^{T+1} \EE_{x_0,\sigma(x_0)} \cdots \EE_{x_{T},\sigma(x_{T})} \bar v (x_{T+1})  \nonumber  \\
    	&= \sum_{t=0}^T \beta^t \EE_{x_0} r(x_t, \sigma(x_t)) + 
    	\beta^{T} \EE_{x_0,\sigma(x_0)} \cdots \EE_{x_{T-1},\sigma(x_{T-1})} g^* (x_T, \sigma(x_T)).
    \end{align}
    Notice that, by Assumption~\ref{a:ws}, we have
    \begin{align*}
        & \left|\beta^{T} \EE_{x_0,\sigma(x_0)} \cdots \EE_{x_{T-1},\sigma(x_{T-1})} g^* (x_T, \sigma(x_T)) \right|    \\
        & \leq \beta^{T} \EE_{x_0,\sigma(x_0)} \cdots \EE_{x_{T-1},\sigma(x_{T-1})} \left| g^* (x_T, \sigma(x_T)) \right|    \\
        & \leq \beta^{T} \EE_{x_0,\sigma(x_0)} \cdots \EE_{x_{T-1},\sigma(x_{T-1})} \|g^*\|_\kappa \kappa (x_T)   \\
        & \leq \beta^{T} \alpha^T \|g^*\|_\kappa \kappa(x_0)
        = (\alpha \beta)^T \|g^*\|_\kappa \kappa(x_0) \to 0
        \quad \text{as } \; T \to \infty.
    \end{align*}
    Letting $T \to \infty$, \eqref{eq:vbar} then implies that $\bar v (x_0) \geq v_\sigma(x_0)$. Since $x_0 \in \XX$ and $\sigma \in \Sigma$ are arbitrary, we have $\bar v \geq v^*$. Moreover, since $g^* = W_0 \bar v$ and there exists a $g^*$-greedy policy $\sigma^*$ by assumption, all the inequalities in \eqref{eq:vbar} holds with equality once we let $\sigma = \sigma^*$. In other words, we have $\bar{v} = v_{\sigma^*} \leq v^*$. In summary, we have shown that $\bar v = v^*$. Hence, $g^* = W_0 v^*$ and $v^* = M W_1 g^*$, and part~(a) of claim~(4) holds.
    
    Since we have shown that $v^*$ is the unique fixed point of $T$ in $\VV$, by Theorem~1 of \cite{ma2018dynamic}, the set of optimal policies is nonempty, and a feasible policy is optimal if and only if it is $v^*$-greedy. Since in addition $g^* = W_0 v^*$, parts (b) and (c) of claim~(4) hold.
\end{proof}

Next, we aim to prove Proposition~\ref{pr:suff}. For all $g \in \GG$ and $(w,z) \in \XX$, we define
\begin{equation*}
	h_g (w,z) := \max_{0 \leq s \leq w} \left\{ r(w, s) + g(z, s) \right\}
	\quad \text{and}
\end{equation*}
\begin{equation*}
	M_g (w,z) := \left\{ s \in [0, w]: h_g(w,z) = r(w,s) + g(z, s) \right\}.
\end{equation*}

The following result is helpful in applications for verifying $S \GG \subset \GG$.

\begin{lemma}
	\label{lm:well_def}
	For all $g \in \GG$, $h_g$ and $M_g$ satisfy the following properties: 
	\begin{enumerate}
		\item $h_g$ is well defined and increasing in $w$,
		\item $h_g$ is continuous on $(0, \infty) \times \mathsf{Z}$,
		\item $h_g$ is continuous on $\XX$ if $r$ is bounded below, and
		\item $M_g$ is nonempty, compact-valued, and upper hemicontinuous.
	\end{enumerate}
\end{lemma}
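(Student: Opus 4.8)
The plan is to treat $h_g$ as the value function of the parametric maximization problem with parameter $(w,z)$, feasible correspondence $(w,z)\mapsto[0,w]$ and objective $(w,z,s)\mapsto r(w,s)+g(z,s)$, and to read off its properties from that structure, ultimately via Berge's theorem of the maximum, with the extra work concentrated on the boundary of $\DD$ where $r$ may equal $-\infty$. For Claim~(1): fix $g\in\GG$; when $w=0$ the feasible set is $\{0\}$ and $h_g(0,z)=r(0,0)+g(z,0)\in\RR\cup\{-\infty\}$ is well defined; when $w>0$, the map $\phi:s\mapsto r(w,s)+g(z,s)$ on the compact interval $[0,w]$ is bounded above (since $r(w,\cdot)$ is decreasing, so $\phi\le r(w,0)+\max_{[0,w]}g(z,\cdot)$, and $g(z,\cdot)$ is continuous) and upper semicontinuous — continuity of $r$ on $\interior\DD$ covers $(0,w)$, and at the endpoints one combines continuity of $g(z,\cdot)$ with the monotonicity of $r(w,\cdot)$, also using that $\sup_{[0,w]}\phi=\sup_{[0,w)}\phi$ because $\phi(w)\le\lim_{s\uparrow w}\phi(s)$ — so the supremum is attained. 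Monotonicity in $w$ is then immediate: $w_1\le w_2$ gives $[0,w_1]\subseteq[0,w_2]$ and $r(w_1,s)\le r(w_2,s)$ for all $s$, hence $h_g(w_1,z)\le h_g(w_2,z)$.

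For Claim~(3) together with the easy half of Claim~(4): if $r$ is bounded below then, by hypothesis, $r$ is continuous on all of $\DD$, so $(w,z,s)\mapsto r(w,s)+g(z,s)$ is real-valued ($r$ finite, $g\in\gG$) and continuous on $\DD$. The correspondence $(w,z)\mapsto[0,w]$ is nonempty, compact-valued, and continuous — upper hemicontinuous because limit points of feasible sequences remain feasible, lower hemicontinuous via the selection $s\mapsto\min\{s,w_n\}$. Berge's theorem of the maximum then gives continuity of $h_g$ on all of $\XX$, including at $w=0$, and that $M_g$ is nonempty, compact-valued and upper hemicontinuous.

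When $r$ is only continuous on $\interior\DD$, Berge does not apply on the faces $\{s=0\}$ and $\{s=w\}$, so I would establish upper and lower semicontinuity of $h_g$ on $(0,\infty)\times\ZZ$ by hand. For upper semicontinuity, take $(w_n,z_n)\to(w_0,z_0)$ with $w_0>0$, pick maximizers $s_n\in M_g(w_n,z_n)$, pass to a subsequence with $s_n\to s_0\in[0,w_0]$; using continuity of $g$, monotonicity of $r$ in $s$ (to dominate $r(w_n,s_n)$ by the value of $r$ at a slightly smaller argument, where $r$ is continuous), and the bound $h_g(w_0,z_0)\ge\lim_{s\uparrow s_0}\phi(s)$ (with obvious modifications when $s_0=0$), one obtains $\limsup_n h_g(w_n,z_n)\le h_g(w_0,z_0)$. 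For lower semicontinuity, take a maximizer $s^*$ at $(w_0,z_0)$, use the feasible selection $s_n:=\min\{s^*,w_n\}\to s^*$, and pass to the limit by continuity of $r+g$ at interior points (the case $r(w_0,s^*)=-\infty$ being trivial). This gives Claim~(2). Claim~(4) then follows along the same lines: $M_g$ is nonempty by Claim~(1), compact-valued as a closed subset of $[0,w]$ (the objective being u.s.c.\ in $s$), and upper hemicontinuous because it is locally bounded with closed graph — if $s_n\in M_g(w_n,z_n)$ and $s_n\to s_0$, the upper-semicontinuity estimate above plus $h_g(w_n,z_n)\to h_g(w_0,z_0)$ (Claim~(2)) force $r(w_0,s_0)+g(z_0,s_0)\ge h_g(w_0,z_0)$, i.e.\ $s_0\in M_g(w_0,z_0)$.

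The main obstacle is exactly the boundary of $\DD$: because $r$ may take the value $-\infty$ and is assumed continuous only on $\interior\DD$, Berge's theorem cannot be applied directly, and one must check attainment of the maximum, the (semi)continuity of $h_g$, and closedness of the graph of $M_g$ by hand near the faces $\{s=0\}$ and $\{s=w\}$ and near $w=0$. The decisive input is the monotonicity of $r$ in $s$: it forces the one-sided limits of $s\mapsto r(w,s)$ onto the correct side of the endpoint values — precisely what the semicontinuity claims at the faces require — and in particular it yields the inequality $h_g(w,z)\ge\lim_{s\uparrow w}\{r(w,s)+g(z,s)\}$ that rescues upper semicontinuity at the face $\{s=w\}$ even when $r$ is discontinuous there.
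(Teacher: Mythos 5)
Your overall architecture mirrors the paper's: claim~(1) by a direct argument, claims~(2)--(4) by a theorem-of-the-maximum argument with the boundary of $\DD$ treated by hand (the paper simply cites ``Berge adjusted for $-\infty$''), so the question is whether your boundary work closes. It does not, and the failure is concentrated at the face $\{s=w\}$. Monotonicity of $r(w,\cdot)$ (decreasing in $s$) gives upper semicontinuity of $\phi(s)=r(w,s)+g(z,s)$ at $s=0$, but at $s=w$ it gives the opposite inequality, $\liminf_{s\uparrow w}r(w,s)\geq r(w,w)$, i.e.\ lower semicontinuity; so your assertion that $\phi$ is u.s.c.\ on $[0,w]$ is unjustified there, and the patch ``$\sup_{[0,w]}\phi=\sup_{[0,w)}\phi$, so the supremum is attained'' is a non sequitur: a supremum over a half-open interval need not be attained. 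Concretely, take $r(w,s)=-s$ for $0\leq s<w$, $r(w,w)=-w-1$, and $g(z,s)=2\min\{s,1\}$. This $r$ is increasing in $w$, decreasing in $s$, continuous on the interior of $\DD$ and unbounded below, and $g\in\GG$; yet for $w\in(0,1)$ the objective equals $s$ on $[0,w)$ and $w-1$ at $s=w$, so no maximizer exists. The paper's proof avoids exactly this by invoking the dichotomy ``either $r$ is bounded below and continuous on all of $\DD$, or $r$ is continuous on $\interior\DD$ with $r(w,s)\to-\infty$ as $s\to w$''; in the latter case $r(w,w)=-\infty$ coincides with the left limit, the objective is u.s.c.\ (indeed extended-real continuous) on the compact feasible set, and attainment plus the Berge-type conclusions follow. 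You never invoke this extra structure, and without it the step fails.

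The same issue recurs, less visibly, at the other face in your proof of claim~(2): when the maximizers $s_n$ converge to $s_0=0$, the ``slightly smaller argument'' domination is unavailable and monotonicity of $r$ in $w$ bounds $r(w_n,s_n)$ in the wrong direction, so the ``obvious modifications when $s_0=0$'' are not obvious (likewise the case $s^*=0$ in your l.s.c.\ step). What is actually needed there is continuity of $r$ at the face $\{s=0\}$ — automatic in the applications, where $r(w,s)=u(w-s)$, and implicit in the paper's reading of its hypotheses, but not a consequence of interior continuity plus monotonicity: an interior-continuous monotone $r$ whose trace $r(\cdot,0)$ jumps makes $h_g$ discontinuous. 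The repair is to state and use the hypothesis the paper's proof actually relies on — $r$ continuous as an extended-real-valued function on $\DD$, with $-\infty$ permitted (only) at the face $s=w$ — after which your semicontinuity computations at $\{s=w\}$ and the standard Berge argument in the bounded-below case are fine and indeed flesh out the paper's one-line citation.
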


\begin{proof}
	Fix $g \in \GG$. Since $g$ is bounded below, 
	$h_g(0,z) = r(0,0) + g(0,z) \in \RR \cup \{ -\infty\}$ and $h_g$ is well defined at $w = 0$. Now consider $w>0$. Let $\DD_0$ be the interior of $\DD$. By assumption, either 
	\begin{enumerate}
		\item[(i)] $r$ is continuous on $\DD_0$ and $\lim_{s \to w} r(w,s) = -\infty$ for some $w \in \RR_+$, or
		\item[(ii)] $r$ is continuous and bounded below.
	\end{enumerate}
	Each scenario, since $g$ is continuous, the maximum in the definition of $h_g$ can be attained at some $s \in [0, w]$. Hence, $h_g$ is well defined for all $w>0$. Regarding monotonicity, let $w_1, w_2 \in \RR_+$ with $w_1 < w_2$. By the monotonicity of $r$, we have
	\begin{equation*}
		h_g (w_1, z) \leq \max_{s \in [0, w_1]} \{ r(w_2,s) + g(s,z) \}
		\leq \max_{s \in [0, w_2]} \{ r(w_2,s) + g(s,z) \} = h_g (w_2,z).
	\end{equation*}
	Hence, claim (a) holds. Claims (b)--(d) follow from Berge's theorem of maximum (adjusted to accommodate possibly negative infinity valued objective functions).
\end{proof}

\begin{proof}[Proof of Proposition~\ref{pr:suff}]
	$\ell$ is bounded below since, by the monotonicity of $f$ and $r$, 
	\begin{equation*}
		\ell(x,a) = \EE_{z, s} r(w',0)
		\geq \EE_{z} r[f(0,\eta'), 0] = \underline{\ell} (z),
	\end{equation*}
	which is bounded below by assumption.
	Moreover, it is obvious that $\GG$ is a closed subset of $\gG$. Existence of $g$-greedy policies for $g$ in $\GG$ has been verified by Lemma~\ref{lm:well_def}. It remains to show that $S \GG \subset \GG$. For fixed $g \in \GG$, Theorem~\ref{t:cs} implies that $Sg \in \gG$. To see that $Sg$ is increasing in its last argument and continuous, note that by Lemma~\ref{lm:well_def}, $h_g$
	is continuous on $\DD_0$ and increasing in $w'$. For all $s_1, s_2 \in \AA$ with $s_1 \leq s_2$, the monotonicity of $f$ implies that
	\begin{align*}
		Sg(z, s_1) &= \beta \EE_{z, s_1} h_g (w',z') = \beta \EE_{z} h_g (f(s_1,\eta'), z')    \\
		& \leq \beta \EE_{z} h_g (f(s_2,\eta'), z') = \beta \EE_{z, s_2} h_g (w',z') = S g(z, s_2).  
	\end{align*}
	Hence, $Sg$ is increasing in its last argument.
	In addition, the definition of $\GG$ and the monotonicity of $r$ and $f$ implies that
	\begin{equation*}
		r(f(0,\eta'), 0) - \alpha_1 \leq h_g (w',z') \leq \alpha_2 \, \kappa (w',z')
		\quad \text{for some } \alpha_1, \alpha_2 \in \RR_+.
	\end{equation*}
	Since $\kappa_e$ and $\underline{\ell}$ are continuous and the stochastic kernel $P$ is Feller, Fatou's lemma implies that $Sg (z,s) = \beta \EE_{z,s} h_g(w',z')$ is continuous.
\end{proof}

\bibliographystyle{ecta}

\bibliography{ubdp5}

\begin{thebibliography}{28}
\newcommand{\enquote}[1]{``#1''}
\expandafter\ifx\csname natexlab\endcsname\relax\def\natexlab#1{#1}\fi

\bibitem[\protect\citeauthoryear{Abbring, Campbell, Tilly, and Yang}{Abbring
  et~al.}{2018}]{abbring2018very}
\textsc{Abbring, J.~H., J.~R. Campbell, J.~Tilly, and N.~Yang} (2018):
  \enquote{Very Simple Markov-Perfect Industry Dynamics: Theory,}
  \emph{Econometrica}, 86, 721--735.

\bibitem[\protect\citeauthoryear{A\c{c}{\i}kg{\"o}z}{A\c{c}{\i}kg{\"o}z}{2018}]{Acikgoz2018}
\textsc{A\c{c}{\i}kg{\"o}z, {\"O}.~T.} (2018): \enquote{On the Existence and
  Uniqueness of Stationary Equilibrium in {B}ewley Economies with Production,}
  \emph{Journal of Economic Theory}, 173, 18--55.

\bibitem[\protect\citeauthoryear{Aguiar and Amador}{Aguiar and
  Amador}{2019}]{AGAM2019}
\textsc{Aguiar, M. and M.~Amador} (2019): \enquote{A Contraction for Sovereign
  Debt Models,} \emph{Journal of Economic Theory}, 183, 842--875.

\bibitem[\protect\citeauthoryear{Aguiar, Amador, Hopenhayn, and Werning}{Aguiar
  et~al.}{2019}]{AAHW2019}
\textsc{Aguiar, M., M.~Amador, H.~Hopenhayn, and I.~Werning} (2019):
  \enquote{Take the Short Route: Equilibrium Default and Debt Maturity,}
  \emph{Econometrica}, 87, 423--462.

\bibitem[\protect\citeauthoryear{Alvarez and Stokey}{Alvarez and
  Stokey}{1998}]{alvarez1998dynamic}
\textsc{Alvarez, F. and N.~L. Stokey} (1998): \enquote{Dynamic programming with
  homogeneous functions,} \emph{Journal of Economic Theory}, 82, 167--189.

\bibitem[\protect\citeauthoryear{Arellano}{Arellano}{2008}]{arellano2008default}
\textsc{Arellano, C.} (2008): \enquote{Default risk and income fluctuations in
  emerging economies,} \emph{The American Economic Review}, 98, 690--712.

\bibitem[\protect\citeauthoryear{B{\"a}uerle and Ja{\'s}kiewicz}{B{\"a}uerle
  and Ja{\'s}kiewicz}{2018}]{bauerle2018stochastic}
\textsc{B{\"a}uerle, N. and A.~Ja{\'s}kiewicz} (2018): \enquote{Stochastic
  optimal growth model with risk sensitive preferences,} \emph{Journal of
  Economic Theory}, 173, 181--200.

\bibitem[\protect\citeauthoryear{Benhabib, Bisin, and Zhu}{Benhabib
  et~al.}{2015}]{benhabib2015wealth}
\textsc{Benhabib, J., A.~Bisin, and S.~Zhu} (2015): \enquote{The wealth
  distribution in Bewley economies with capital income risk,} \emph{Journal of
  Economic Theory}, 159, 489--515.

\bibitem[\protect\citeauthoryear{Bertsekas}{Bertsekas}{2013}]{bertsekas2013abstract}
\textsc{Bertsekas, D.~P.} (2013): \emph{Abstract dynamic programming}, Athena
  Scientific.

\bibitem[\protect\citeauthoryear{Bertsekas}{Bertsekas}{2017}]{bertsekas2017dynamic}
---\hspace{-.1pt}---\hspace{-.1pt}--- (2017): \emph{Dynamic programming and
  optimal control}, vol.~4, Athena Scientific.

\bibitem[\protect\citeauthoryear{Blackwell}{Blackwell}{1965}]{blackwell1965discounted}
\textsc{Blackwell, D.} (1965): \enquote{Discounted dynamic programming,}
  \emph{The Annals of Mathematical Statistics}, 36, 226--235.

\bibitem[\protect\citeauthoryear{Boyd}{Boyd}{1990}]{boyd1990recursive}
\textsc{Boyd, J.~H.} (1990): \enquote{Recursive utility and the Ramsey
  problem,} \emph{Journal of Economic Theory}, 50, 326--345.

\bibitem[\protect\citeauthoryear{Cao}{Cao}{2018}]{cao2018recursive}
\textsc{Cao, D.} (2018): \enquote{Recursive Equilibrium in {K}rusell and
  {S}mith (1998),} Tech. rep., SSRN 2863349.

\bibitem[\protect\citeauthoryear{Fagereng, Guiso, Malacrino, and
  Pistaferri}{Fagereng et~al.}{2016}]{fagereng2016heterogeneityNBER}
\textsc{Fagereng, A., L.~Guiso, D.~Malacrino, and L.~Pistaferri} (2016):
  \enquote{Heterogeneity and Persistence in Returns to Wealth,} Tech. rep.,
  National Bureau of Economic Research.

\bibitem[\protect\citeauthoryear{Heathcote, Storesletten, and
  Violante}{Heathcote et~al.}{2010}]{heathcote2010macroeconomic}
\textsc{Heathcote, J., K.~Storesletten, and G.~L. Violante} (2010):
  \enquote{The Macroeconomic Implications of Rising Wage Inequality in the
  {U}nited {S}tates,} \emph{Journal of Political Economy}, 118, 681--722.

\bibitem[\protect\citeauthoryear{Hubmer, Krusell, and Smith}{Hubmer
  et~al.}{2018}]{hubmer2018comprehensive}
\textsc{Hubmer, J., P.~Krusell, and A.~A. Smith, Jr.} (2018): \enquote{A
  Comprehensive Quantitative Theory of the {US} Wealth Distribution,} Tech.
  rep., Yale.

\bibitem[\protect\citeauthoryear{Ja{\'s}kiewicz and Nowak}{Ja{\'s}kiewicz and
  Nowak}{2011}]{jaskiewicz2011discounted}
\textsc{Ja{\'s}kiewicz, A. and A.~S. Nowak} (2011): \enquote{Discounted dynamic
  programming with unbounded returns: application to economic models,}
  \emph{Journal of Mathematical Analysis and Applications}, 378, 450--462.

\bibitem[\protect\citeauthoryear{Jovanovic}{Jovanovic}{1982}]{jovanovic1982selection}
\textsc{Jovanovic, B.} (1982): \enquote{Selection and the evolution of
  industry,} \emph{Econometrica}, 649--670.

\bibitem[\protect\citeauthoryear{Kaplan and Violante}{Kaplan and
  Violante}{2010}]{kaplan2010much}
\textsc{Kaplan, G. and G.~L. Violante} (2010): \enquote{How much consumption
  insurance beyond self-insurance?} \emph{American Economic Journal:
  Macroeconomics}, 2, 53--87.

\bibitem[\protect\citeauthoryear{Le~Van and Vailakis}{Le~Van and
  Vailakis}{2005}]{le2005recursive}
\textsc{Le~Van, C. and Y.~Vailakis} (2005): \enquote{Recursive utility and
  optimal growth with bounded or unbounded returns,} \emph{Journal of Economic
  Theory}, 123, 187--209.

\bibitem[\protect\citeauthoryear{Li and Stachurski}{Li and
  Stachurski}{2014}]{li2014solving}
\textsc{Li, H. and J.~Stachurski} (2014): \enquote{Solving the income
  fluctuation problem with unbounded rewards,} \emph{Journal of Economic
  Dynamics and Control}, 45, 353--365.

\bibitem[\protect\citeauthoryear{Ma and Stachurski}{Ma and
  Stachurski}{2018}]{ma2018dynamic}
\textsc{Ma, Q. and J.~Stachurski} (2018): \enquote{Dynamic Programming
  Deconstructed,} \emph{arXiv preprint arXiv:1811.01940}.

\bibitem[\protect\citeauthoryear{Martins-da Rocha and Vailakis}{Martins-da
  Rocha and Vailakis}{2010}]{martins2010existence}
\textsc{Martins-da Rocha, V.~F. and Y.~Vailakis} (2010): \enquote{Existence and
  uniqueness of a fixed point for local contractions,} \emph{Econometrica}, 78,
  1127--1141.

\bibitem[\protect\citeauthoryear{Matkowski and Nowak}{Matkowski and
  Nowak}{2011}]{matkowski2011discounted}
\textsc{Matkowski, J. and A.~S. Nowak} (2011): \enquote{On discounted dynamic
  programming with unbounded returns,} \emph{Economic Theory}, 46, 455--474.

\bibitem[\protect\citeauthoryear{McCall}{McCall}{1970}]{mccall1970economics}
\textsc{McCall, J.~J.} (1970): \enquote{Economics of information and job
  search,} \emph{The Quarterly Journal of Economics}, 113--126.

\bibitem[\protect\citeauthoryear{Rinc{\'o}n-Zapatero and
  Rodr{\'\i}guez-Palmero}{Rinc{\'o}n-Zapatero and
  Rodr{\'\i}guez-Palmero}{2003}]{rincon2003existence}
\textsc{Rinc{\'o}n-Zapatero, J.~P. and C.~Rodr{\'\i}guez-Palmero} (2003):
  \enquote{Existence and uniqueness of solutions to the Bellman equation in the
  unbounded case,} \emph{Econometrica}, 71, 1519--1555.

\bibitem[\protect\citeauthoryear{Rust}{Rust}{1987}]{rust1987optimal}
\textsc{Rust, J.} (1987): \enquote{Optimal replacement of GMC bus engines: An
  empirical model of Harold Zurcher,} \emph{Econometrica}, 999--1033.

\bibitem[\protect\citeauthoryear{Wessels}{Wessels}{1977}]{wessels1977markov}
\textsc{Wessels, J.} (1977): \enquote{Markov programming by successive
  approximations with respect to weighted supremum norms,} \emph{Journal of
  Mathematical Analysis and Applications}, 58, 326--335.

\end{thebibliography}

\end{document}